\newtheorem{lemma}{Lemma}[section]
\newtheorem{proposition}[lemma]{Proposition}
\newtheorem{corollary}[lemma]{Corollary}
\newtheorem{theorem}[lemma]{Theorem}
\title{%
  A subquadratic algorithm for the simultaneous conjugacy problem
}
\author[a,b]{Andrej Brodnik\thanks{%
    This work is sponsored in part by the Slovenian
    Research Agency
    (research program P2-0359 and
    research project N2-0053)
  }
}
\author[c,a]{Aleksander Malni\v{c}\thanks{%
    This work is sponsored in part by the Slovenian
    Research Agency
    (research program P1-0285 and
    research projects N1-0062, J1-9108, J1-9110, J1-9187, J1-1694, J1-1695)
  }
}
\author[a]{Rok Po\v{z}ar\thanks{%
    Corresponding author. This work is sponsored in part by the Slovenian
    Research Agency
    (research program P1-0404 and
    research projects N1-0062, J1-9110, J1-9187, J1-1694)
  }
}
\affil[a]{University of Primorska, IAM/FAMNIT}
\affil[b]{University of Ljubljana, FRI}
\affil[c]{University of Ljubljana, PeF}
\begin{document} 

\maketitle

\begin{abstract}
The $d$-Simultaneous Conjugacy problem in the symmetric group $S_n$ asks 
whether there exists a permutation $\tau \in S_n$ such that $b_j = \tau^{-1}a_j \tau$ holds for all  $j = 1,2,\ldots, d$, where $a_1, a_2,\ldots , a_d$ and $b_1, b_2,\ldots , b_d$ are given sequences of permutations in $S_n$. The time complexity of existing algorithms for solving the problem is $O(dn^2)$. We show that for a given  positive integer $d$ the  $d$-Simultaneous Conjugacy problem in $S_n$ can be solved in $o(n^2)$ time.
\end{abstract}

\begin{quote}
  \textbf{Keywords:} canonical labeling, graph isomorphism,
  simultaneous conjugacy problem.
\end{quote}

\section{Introduction}

The {\bf $d$-Simultaneous Conjugacy problem in the symmetric group}  $S_n$ asks whether there exists a permutation  in $S_n$ which simultaneously conjugates  two given $d$-tuples of permutations from $S_n$. More formally,  given two ordered $d$-tuples $a = (a_1, a_2,\ldots , a_d)$ and $b= (b_1, b_2,\ldots , b_d)$ of permutations from $S_n$, is there a permutation $\tau \in S_n$ such that $b_j = \tau^{-1}a_j \tau$ holds for all indices 
$j = 1,2,\ldots, d$? To save words, we shall refer to this problem as $d$-SCP in $S_n$ or even  just as SCP.

This problem arises in many forms in various fields of mathematics and computer science, in particular,  when deciding whether two objects from a given class are structurally equivalent. A brief list includes the following:
in the theory of covering graphs, the  problem of equivalence of covering projections \cite{MNS00}, and moreover, the construction of  all regular covering projections along which a given group of automorphisms lifts \cite{PP17, P19};
in the theory of maps on surfaces, the question whether two oriented maps on a closed surface are combinatorially  isomorphic \cite{MNS02}; in computational group theory,  the problem  whether the centralizer in the symmetric group of a given group is non-trivial \cite{Seress}. Last but not least, in the design of efficient and fast interconnection networks for computer systems, the question of equivalence of permutation networks  also reduces to the SCP \cite{Srid89, Oruc}.

Because of its fundamental importance, the complexity of the $d$-SCP in $S_n$ has been studied since mid seventies \cite{Fontet}.  The problem can be viewed as a special case of the graph isomorphism problem. More precisely, let  $G_a$  be an arc-colored (multi)digraph on the vertex set $[n] = \{1,2,\ldots, n\}$ such that  there is an arc from $u$ to $v$ colored $j$ if and only if  $a_j$ maps $u$ to $v$, for $j=1,2,\ldots, d$. The permutation digraph $G_b$  is defined in a similar fashion. (See Section~\ref{sec:prelim} for a more formal definition.)  The permutation $\tau \in S_n$ that simultaneously conjugates the two tuples  is precisely a color and direction preserving isomorphism from  $G_a$ onto $G_b$ (assuming that permutations in $S_n$ are multiplied from left to right). The graph isomorphism problem   is  hard in general: no polynomial time algorithm is known, nor is the problem known to be NP-complete.  However, there is a recent result due to Babai  \cite{babai-quasi} presenting a quasipolynomial time algorithm.

In our context, things are fundamentally different. Namely,
 when considering the connected components of $G_a$ and $G_b$, the
additional structure imposed by colors is so strong that every color and direction  preserving isomorphism  is uniquely determined by the image of  one  arbitrary vertex. This implies that testing for the existence of such an 
isomorphism can be done in polynomial time. The first algorithm for the $d$-SCP in $S_n$ was proposed in 1977 by Fontet running  in time $O(dn^2)$ \cite{Fontet}. Five years later, the algorithm was independently rediscovered by Hoffmann \cite{Hoff82}.
An important special case of the SCP occurs  when the tuples $a$ and $b$ generate transitive permutation groups or, equivalently, when $G_a$ and $G_b$ are connected.  This restricted problem, referred to as  the transitive SCP, was considered  by Sridhar in 1989 \cite{Srid89}. However, his $O(dn \log(dn))$-time algorithm does not work correctly as we recently showed in \cite{BMP}. Moreover, in the same paper we also  showed that the transitive SCP can be solved in subquadratic time in $n$ at a given $d$; more precisely, we developed an algorithm with the running time $O(n^2  \log d /  \log n + dn\log n)$.

A natural question arises whether  the $d$-SCP in $S_n$ can also be solved in subquadratic time in $n$ at a given $d$. The following main result answers the question affirmatively.
\begin{theorem}\label{thm:main}
Given a positive integer $d$, the  $d$-SCP in the symmetric group $S_n$ can be solved  in $o(n^2)$ time.
\end{theorem}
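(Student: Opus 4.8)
The plan is to reduce the general problem to the transitive case, which is already handled by the algorithm of \cite{BMP}, by decomposing the permutation digraphs into connected components and comparing these up to isomorphism. Concretely, I would first make the digraph formulation of Section~\ref{sec:prelim} precise: $G_a$ and $G_b$ are built in $O(dn)$ time, and a solution $\tau$ of the $d$-SCP is exactly a colour- and direction-preserving isomorphism $G_a\to G_b$. The connected components of $G_a$ are the orbits of $\langle a_1,\dots,a_d\rangle$ and are computed in $O(dn)$ time by breadth-first search. Since any such isomorphism carries components to components of equal size, one has the elementary equivalence: $G_a\cong G_b$ if and only if there is a bijection $\phi$ from the component set of $G_a$ to that of $G_b$ with $C\cong\phi(C)$ for every component $C$; the forward implication is restriction, and for the converse one glues local isomorphisms $\psi_C\colon C\to\phi(C)$ into a single permutation of $[n]$, well defined because the components partition $[n]$ on both sides. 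Equivalently, $G_a\cong G_b$ iff the multiset of isomorphism types of the components of $G_a$ equals that of $G_b$.

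To turn this into an algorithm I would attach to each connected component a \emph{canonical form} that is a complete isomorphism invariant. Exploiting the rigidity recalled in the introduction — a colour- and direction-preserving isomorphism between connected instances is determined by the image of a single vertex — a component $C$ on $m$ vertices has at most $m$ distinct rooted encodings: for a chosen root $v$, run the deterministic breadth-first search that scans the colours in the fixed order $1,\dots,d$, label the vertices $0,1,\dots,m-1$ in the order of discovery, and output the transition table $t(i,c)$ giving the label of the head of the $c$-coloured arc leaving vertex $i$. Two rooted components have the same table precisely when the induced labelling is a rooted isomorphism, so the lexicographic minimum of these tables over all roots of $C$ is well defined, coincides with the corresponding minimum for every $C'\cong C$, and can serve as $\mathrm{can}(C)$. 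Given all these forms, the multiset test reduces to sorting and comparing them; their total length is $O(dn\log n)$ bits, so this postprocessing costs $O(dn\log n)$, which is $o(n^2)$ for fixed $d$.

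What remains — and where I expect the real work to lie — is computing $\mathrm{can}(C)$ in subquadratic time, since naive enumeration of the $m$ rooted encodings costs $\Theta(dm^2)$. Here I would invoke the techniques behind the transitive-SCP algorithm of \cite{BMP} and, where necessary, strengthen them from deciding isomorphism of two connected instances to outputting a canonical form of one connected instance: the key device is to pack $\Theta(\log m/\log d)$ colour symbols into a machine word, so that the $m$ candidate encodings are generated and compared with a $\Theta(\log m/\log d)$ speed-up, giving $\mathrm{can}(C)$ in time $O(m^2\log d/\log m + dm\log m)$; for components with $m$ below a slowly growing threshold $t(n)$, say $t(n)=\lceil\log n\rceil$, one simply falls back on the naive $O(dm^2)$ method. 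Verifying this canonical-form version of \cite{BMP} with the stated bound is the main obstacle; granting it, the total cost is $O(dn)$ for the setup, $\sum_{m_i\le t(n)}O(dm_i^2)\le O(d\,t(n)\,n)$ for the small components, and $\sum_{m_i>t(n)}O(m_i^2\log d/\log m_i)\le O(n^2\log d/\log t(n))$ together with $\sum_i O(dm_i\log m_i)=O(dn\log n)$ for the large ones, plus $O(dn\log n)$ for the final sort. Using $\sum_i m_i^2\le(\sum_i m_i)^2=n^2$ and $t(n)=\lceil\log n\rceil$, every term is $o(n^2)$ for fixed $d$, which establishes Theorem~\ref{thm:main}.
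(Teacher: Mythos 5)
Your overall plan mirrors the paper's: build the coloured permutation digraphs, decompose into connected components, exploit rigidity (a colour-isomorphism of connected instances is fixed by one vertex image) to define a BFS-rooted canonical form per component, then reduce $G_a\cong G_b$ to a multiset equality of component invariants, settled by sorting. The paper's Section~\ref{sec:can} is exactly this for the naive $O(dm^2)$-per-component version. Where your proposal diverges, though, is precisely at the step you yourself flag as the ``main obstacle,'' and that is a genuine gap rather than a deferrable detail.

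You need, for every above-threshold component, a \emph{canonical form} computed in $O(m^2\log d/\log m + dm\log m)$ time, and you propose to obtain it by strengthening the transitive-SCP algorithm of \cite{BMP} from a decision procedure to a canonical-labeling procedure. Neither the present paper nor \cite{BMP} supplies such a routine, and the upgrade is not automatic: \cite{BMP} decides whether two given connected instances are isomorphic, which is a weaker primitive than outputting a canonical encoding of one instance at the same cost. Without it, the only canonical labeler available is the naive $O(dm^2)$ one (Lemma~\ref{lemma:time}), which is too slow on a component of size $\Theta(n)$, so your large-component bound is unproved.

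The paper avoids the issue by never computing a canonical form for large components at all. It calls \cite{BMP} only as a \emph{pairwise decision} oracle, and only on components of size $\Theta(n)$; there are $O(1)$ of those, so $O(1)$ pairwise tests at $O(n^2\log d/\log n + dn\log n)$ each suffice, while everything of size $o(n)$ goes through the naive canonical labeler, which is cheap exactly because those pieces are small. Your threshold $t(n)=\lceil\log n\rceil$ admits $\Theta(n/\log n)$ above-threshold components, so the ``$O(1)$ pairs'' shortcut does not apply directly, but the pairwise route still closes the gap: for $p_i$ components of size $n_i>t(n)$, run $p_i^2$ pairwise \cite{BMP} tests. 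Since $\sum_i(p_in_i)^2\le\bigl(\sum_ip_in_i\bigr)^2=n^2$ and $\log n_i\ge\log\log n$, the first terms sum to $O(n^2\log d/\log\log n)$; since $p_i\log n_i\le n\log n_i/n_i\le n\log\log n/\log n$ (monotonicity of $\log x/x$), the $dm\log m$ terms sum to $O(dn^2\log\log n/\log n)$. Both are $o(n^2)$ for fixed $d$, so replacing your hypothetical canonical-form \cite{BMP} by the actual decision version repairs the argument and brings it in line with the paper's.
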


The main idea  behind our approach is as follows. First, we define two extreme cases depending on the number of connected components  on the one hand, and on the size of individual components on the other hand.  Second, a combination of solutions to these two extreme cases then yields the desired result in general. 

As for the extreme cases, we say that a  connected component is {\bf large}, if it consists of $\Theta(n)$ vertices,
 and   {\bf small} otherwise.  The first extreme case is when a digraph consists  of only large components (and consequently, there are $O(1)$ of them). In the other extreme the digraph consists  of only  small components (and consequently, there are $\omega(1)$ of them). 
 In the first case,  we simply consider each pair of connected components of the same
  size and test whether they are isomorphic by applying the above mentioned subquadratic algorithm  from \cite{BMP}. 
As for the other case, a  different specially tailored approach is used.
 To this end, we present a canonical-labeling-based algorithm that 
 takes $O(dn^2)$ time; however, when both digraphs consist only of small connected components its running time  decreases  to subquadratic in $n$ at a given $d$.

The structure of the paper is the following. Section~\ref{sec:prelim} contains the necessary notation and basic definitions  to make the paper self-contained. In Section~\ref{sec:can} we present a canonical-labeling-based algorithm for the SCP. 
The main theorem is proven in Section~\ref{sec:main}. We conclude
the paper by discussing  some open problems in Section~\ref{sec:conc}.

\section{Permutation digraphs and colour-isomorphism} \label{sec:prelim}

We  establish some notation and terminology used in the paper. For the concepts not defined here see \cite{Diestel}.
%

For $i\in [n]$ and $g\in S_n$, we write $i^g$ for the image of $g$ under the permutation $g$ rather than by the more usual $g(i)$.
Let $\sigma = (\sigma_1,\sigma_2,\ldots, \sigma_d)$ be a $d$-tuple of permutations in $S_n$. 
The {\bf permutation digraph} of $\sigma$ is a pair $G_\sigma = (V, A)$, where $V(G_\sigma)= V = [n]$ is  the set of {\bf vertices}, and $A(G_\sigma) = A$ is the set of ordered pairs $(i, \sigma_k)$, $i \in [n], k\in [d]$,  called {\bf arcs}. 
The {\bf size} of $G_\sigma$ is $|V(G_\sigma)|$, while the {\bf degree} of $G_\sigma$ is $|\sigma|$. An arc $e = (i, \sigma_k)$ has its initial vertex $\textrm{ini}(e) = i$,  terminal vertex  $\textrm{ter}(e) = i^{\sigma_k}$, and {\bf color} $c(e) =k$; the vertex  $i^{\sigma_k}$  is also referred  to as the {\bf out-neighbour} of $i$ coloured $k$. The vertices $\textrm{ini}(e)$ and  $\textrm{ter}(e)$ are the end-vertices of $e$. 

A walk from a vertex $v_0$ to a vertex $v_m$ in a permutation digraph $G_\sigma$ is an alternating   sequence  $W= v_0, e_1, v_1, e_2, \ldots, e_m, v_{m}$ of  vertices and arcs in $G$ such that for each  $i\in [m]$, the vertices $v_{i-1}$ and $v_{i}$ are the end-vertices of the arc $e_i$. If for any  two vertices $u$ and $v$ in $G_a$ there is a walk from $u$ to $v$, we say that $G_\sigma$ is {\bf connected}. 
Clearly,  $G_\sigma$  is connected if and only if the tuple $a$ generates a transitive subgroup of $S_n$.
 A subdigraph $H$ of $G_\sigma$ consists of a subset $V(H) \subseteq V(G_\sigma)$  and a subset $A(H) \subseteq A(G_\sigma)$ such that every arc in  $A(H)$ has both end-vertices in $V(H)$.  A walk in a subdigraph $H$ of $G_\sigma$ is a walk in $G_\sigma$ consisting only of arcs from $A(H)$. If $G_{\sigma}$ is not connected, its maximal connected subdigraphs are called the {\bf connected components} of $G_\sigma$. Note that there are no arcs between connected components, and so the components are also permutation digraphs of degree $d$.
 
A {\bf colour-isomorphism} between two permutation digraphs $G_a$ and $G_b$  is a pair $(\phi_V, \phi_A )$ of bijections, where $\phi_V \colon V(G_a) \to V(G_b)$ and $\phi_A \colon A(G_a) \to A(G_b)$ such that $\phi_V(\textrm{ini}(e)) = \textrm{ini}(\phi_A(e))$, $\phi_V(\textrm{ter}(e)) = \textrm{ter}(\phi_A(e))$ and $c(e) = c(\phi_A(e))$ for any arc $e\in A(V_a)$.   
If there is a colour-isomorphism between $G_a$ and $G_b$, we say that $G_a$ and $G_b$ are {\bf colour-isomorphic}, and we write $G_a \cong G_b$.

Let $\mathcal{G}$ be the set of permutation digraphs of size $n$ and degree $d$, and let $L$ be a set of strings over some fixed-sized alphabet. A {\bf labeling function} for $\mathcal{G}$ is a function  $\mathcal{L} \colon \mathcal{G} \to L$. Such a function $\mathcal{L} \colon \mathcal{G} \to L$ is {\bf canonical} whenever for all $G_a, G_b \in \mathcal{G}$  a colour-isomorphism from $G_a$ onto  $G_b$ exists if and only if 
$\mathcal{L}(G_a) = \mathcal{L}(G_b)$. In this case, $\mathcal{L}(G)$ is the  {\bf canonical label} of $G$.

\section{A canonical labeling algorithm}\label{sec:can}

We present an algorithm for finding a canonical label of a permutation digraph $G_a$ of size $n$ and degree $d$ based on publicly known techniques. It runs in $O(dn^2)$ time in general, but in the case when $G_a$  consists of only small connected components its running time decreases to  subquadratic in $n$ at a given $d$.

We first handle the case when $G_a$ is  connected.  
For a fixed  $v\in V(G_a)$ we relabel the vertices of $G_a$ in a breadth-first-search order starting at $v$, see the algorithm
  {\sc Relabel}$(G_a, v)$. The out-neighbours of a current vertex are visited in the ascending order of colours of the respective out-going arcs (lines 7-11 in {\sc Relabel}). Let $\gamma_v \colon V \to V$ be the resulting relabeling. 
The  relabelled  digraph induced by $\gamma_v$ is  $G_{a^{v}}$, where $a^{v} = (\gamma_v ^{-1} a_1\gamma_v, \gamma_v ^{-1} a_2\gamma_v, \ldots, \gamma_v ^{-1} a_d\gamma_v )$ (line 12 in {\sc Relabel}).  

 \begin{algorithm}[h!]
\SetNlSty{<text>}{}{:}
\NoCaptionOfAlgo
\SetKwComment{Comment}{}{}
\SetCommentSty{<text>}
\SetArgSty{text}
\KwIn{Connected permutation digraph $G_a$  of degree $d$ on $n$ vertices,  $v \in V(G_a)$.}
\KwOut{The permutation digraph $G_{a^v}$.}
Initilize an empty queue $Q$\;
Visited $= \{v\}$\;
$v^{\gamma_v}  = 1$\;
Enqueue $v$ into $Q$\;
\While{$|\textrm{Visited}| \neq n$}{
Dequeue $Q$ into $u$\;
\For{$k\leftarrow 1$ \KwTo $d$}{
\If{$ u^{a_k} \notin \textrm{Visited}$}{
Add $u^{a_k}$ to $\textrm{Visited}$\;
$(u^{a_k})^{\gamma_v}  = |\textrm{Visited}|$\;
Enqueue $u^{a_k}$ into $Q$\;
}
}
}
Let $a^{v} = (\gamma_v ^{-1} a_1\gamma_v, \gamma_v ^{-1} a_2\gamma_v, \ldots, \gamma_v ^{-1} a_d\gamma_v )$\;
\Return The digraph $G_{a^{v}}$\;

\caption{{\bf Algorithm }{\sc Relabel}$(G_a, v)$}
\end{algorithm}
 
The {\bf code} of a permutation digraph  $G_a$  is
$$C(G_a) = 1^{a_1}2^{a_1}\cdots n^{a_1}1^{a_2}2^{a_2}\cdots n^{a_2}\cdots 1^{a_d}2^{a_d} \cdots n^{a_d},$$
which is a string  of length $dn$ over $[n]$ obtained by concatenating, in turn, the images of $1,2,\ldots, n$ under the permutations $a_1, a_2, \ldots, a_d$. 
For a connected digraph $G_a$, let $\overline{C}(G_a)$ denote the lexicographically smallest string from among codes $C(G_{a^{v}})$, $v\in V(G_a)$. We now prove that $\overline{C}(G_a)$ is the canonical label of $G_a$. 
\begin{proposition}\label{prop:can-con}
Let $\mathcal{P}^c$ be the set of all connected permutation digraphs of size $n$ and degree $d$, and let $L$ be the set of all strings of length $dn$ over $[n]$.
Then the function  $\mathcal{L}^c \colon \mathcal{P}^c \to L$ defined by 
 $\mathcal{L}^c(G_a) = \overline{C}(G_a)$  is a canonical labeling function for $\mathcal{P}^c$. 
\end{proposition}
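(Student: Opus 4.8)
The plan is to show both directions of the "if and only if" that defines a canonical labeling function. The forward direction is essentially trivial: if $\phi = (\phi_V, \phi_A)$ is a colour-isomorphism from $G_a$ onto $G_b$, then relabelling $G_a$ by $\gamma_v$ and relabelling $G_b$ by $\gamma_{\phi_V(v)}$ yields literally the same permutation digraph, because the BFS relabelling procedure {\sc Relabel} is defined purely in terms of the colour-labelled arc structure: starting from corresponding vertices and always exploring out-neighbours in ascending colour order, the two searches visit vertices in lockstep. Hence $C(G_{a^{v}}) = C(G_{b^{\phi_V(v)}})$ for every $v$, so the two sets of codes $\{C(G_{a^{v}}) : v \in V(G_a)\}$ and $\{C(G_{b^{w}}) : w \in V(G_b)\}$ coincide, and in particular their lexicographic minima agree: $\overline{C}(G_a) = \overline{C}(G_b)$. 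The one technical point worth spelling out here is that $\phi_V$ is a bijection $V(G_a) \to V(G_b)$, and since both vertex sets are $[n]$, the relabelling $\gamma_{\phi_V(v)}$ on $G_b$ composed appropriately with $\phi_V$ equals $\gamma_v$ — i.e. $\gamma_v = \phi_V \gamma_{\phi_V(v)}$ — which is exactly the statement that the two BFS trees are "the same" up to the initial identification.

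For the converse, suppose $\mathcal{L}^c(G_a) = \mathcal{L}^c(G_b)$, i.e. $\overline{C}(G_a) = \overline{C}(G_b)$. Pick vertices $v \in V(G_a)$ and $w \in V(G_b)$ attaining the respective minima, so that $C(G_{a^{v}}) = C(G_{b^{w}})$. The code of a connected permutation digraph determines the digraph completely: from $C(G_{a^v})$ one reads off, for each $i \in [n]$ and $k \in [d]$, the image $i^{(a^v)_k}$, hence all the arcs. So $C(G_{a^v}) = C(G_{b^w})$ forces $G_{a^v} = G_{b^w}$ as labelled digraphs, i.e. $a^v = b^w$ componentwise. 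Now $a^v$ is obtained from $a$ by conjugating every coordinate by $\gamma_v$, and $b^w$ from $b$ by conjugating every coordinate by $\gamma_w$; therefore $\gamma_v^{-1} a_k \gamma_v = \gamma_w^{-1} b_k \gamma_w$ for all $k$, whence $b_k = (\gamma_v \gamma_w^{-1})^{-1} a_k (\gamma_v \gamma_w^{-1})$ for all $k$. Thus $\tau := \gamma_v \gamma_w^{-1} \in S_n$ simultaneously conjugates $a$ to $b$, which is precisely a colour-isomorphism $G_a \to G_b$ (with $\phi_V = \tau$ and $\phi_A$ the induced map on arcs).

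The main obstacle — really the only place care is needed — is the bookkeeping in the forward direction: verifying rigorously that {\sc Relabel}$(G_a, v)$ and {\sc Relabel}$(G_b, \phi_V(v))$ produce identical output when $\phi$ is a colour-isomorphism. The cleanest way to do this is by induction on the order in which vertices enter the queue $Q$: one shows that after processing the $i$-th dequeued vertex, the set Visited in the run on $G_a$ maps bijectively onto Visited in the run on $G_b$ via $\phi_V$, the new labels assigned agree, and the queues are in corresponding order. The inductive step uses that $\phi_A$ preserves colours and commutes with $\textrm{ini}$ and $\textrm{ter}$, so the out-neighbour of $u$ coloured $k$ in $G_a$ is sent by $\phi_V$ to the out-neighbour of $\phi_V(u)$ coloured $k$ in $G_b$; since both algorithms scan colours $k = 1, \ldots, d$ in the same order and test membership in Visited, the same vertices get enqueued in the same order with the same labels. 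Connectedness guarantees every vertex is eventually reached, so Visited reaches size $n$ in both runs and $\gamma_v = \phi_V \gamma_{\phi_V(v)}$ as claimed. Everything else (codes determine digraphs, conjugation identities) is a routine unwinding of definitions.
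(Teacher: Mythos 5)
Your proposal is correct and takes essentially the same approach as the paper. The paper's converse direction matches your argument (equal codes force equal relabelled digraphs, giving the colour-isomorphism), and its forward direction uses a proof by contradiction on the minimal code rather than your direct observation that the multisets $\{C(G_{a^v}) : v \in V(G_a)\}$ and $\{C(G_{b^w}) : w \in V(G_b)\}$ coincide under $\phi_V$; both rest on the same key fact, which you spell out more carefully, namely that {\sc Relabel} run from $\phi_V$-corresponding starting vertices produces identical output (the identity $\gamma_v = \phi_V \gamma_{\phi_V(v)}$, which the paper states tacitly as ``Note that $G_{a^u} = G_{b^w}$'').
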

\begin{proof}
We first show that if $\overline{C}(G_a) = \overline{C}(G_b)$, then $G_a$ and $G_b$ are colour-isomorphic. Let $u\in V(G_a)$ be a vertex for which the permutation digraph $G_{a^u}$ returned by   {\sc Relabel}$(G_a, u)$ has code $C(G_{a^u}) = \overline{C}(G_a)$. Similarly, let $w\in V(G_b)$ be a vertex for which  the permutation digraph $G_{b^w}$ returned by   {\sc Relabel}$(G_b, w)$ has code $C(G_{b^w}) =  \overline{C}(G_b)$. By assumption, it follows that $C(G_{a^u}) = C(G_{b^w})$. Hence $G_{a^u} = G_{b^w}$, and since  $G_{a^u} \cong G_a$ and $G_{b^w} \cong G_b$ it follows  that  $G_a \cong G_b$.

Conversely, let $f$ be a colour-isomorphism mapping $G_a$  onto $G_b$, and  let $u\in V(G_a)$ be a vertex for which the permutation digraph $G_{a^u}$  returned by   {\sc Relabel}$(G_a, u)$  has code $C(G_{a^u}) = \overline{C}(G_a)$. Next, let $w = f(u)$ and  consider the permutation digraph $G_{b^{w}}$  returned by {\sc Relabel}$(G_b, w)$.
Note that $G_{a^u} = G_{b^{w}},$ and hence $C(G_{a^u}) = C(G_{b^w})$. It remains to prove that $C(G_{b^{w}}) =  \overline{C}(G_b)$. Suppose to the contrary that for some $G_{b^z}$ returned by {\sc Relabel}$(G_b, z)$, the string $C(G_{b^z})$ is  lexicographically smaller than the string $C(G_{b^{w}})$. Consider now the permutation digraph $G_{a^{f^{-1}(z)}}$  returned by   {\sc Relabel}$(G_a, f^{-1}(z))$. Similarly as above,   $C(G_{b^z}) = C(G_{a^{f^{-1}(z)}})$. Since $C(G_{b^z})$ is lexicographically  smaller then $C(G_{b^w}) = C(G_{a^u})$, it follows that  $C(G_{a^{f^{-1}(z)}})$ is lexicographically smaller than $C(G_{a^u})$. A contradiction.
\end{proof}

Next, we  bound the time complexity of computing $\overline{C}(G_a)$.

\begin{lemma}\label{lemma:time}
The canonical label $\overline{C}(G_a)$ of a connected permutation digraph $G_a$ of  size $n$ and degree $d$ can be computed in $O(dn^2)$ time.
\end{lemma}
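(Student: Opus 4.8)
The plan is to show that the naive \emph{try every start vertex} approach already meets the bound. Throughout, permutations are kept as length-$n$ arrays, so that each value $i^{a_k}$ is available in $O(1)$ time; building these arrays from the input takes $O(dn)$ and is done once. The algorithm then iterates over all $n$ vertices $v\in V(G_a)$, and for each $v$ it (i) calls {\sc Relabel}$(G_a,v)$ to obtain the relabeling $\gamma_v$, (ii) forms the conjugated tuple $a^v$, (iii) writes the string $C(G_{a^v})$ and compares it lexicographically with the smallest code produced so far, retaining the smaller one. When all vertices have been processed, the stored string equals $\overline{C}(G_a)$ by definition of $\overline{C}$ and Proposition~\ref{prop:can-con}.

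It remains to check that one iteration costs $O(dn)$. The call {\sc Relabel}$(G_a,v)$ is a breadth-first search; since $G_a$ is connected it reaches all $n$ vertices, so the while loop performs exactly $n$ dequeues, and for a dequeued vertex $u$ the inner loop inspects its $d$ out-neighbours $u^{a_1},\dots,u^{a_d}$, each lookup and each of the constant-time bookkeeping operations (membership test in \emph{Visited} via a Boolean array, assignment of the new label, enqueue) costing $O(1)$. Note that ``ascending order of colours'' requires no sorting, because the loop already runs $k$ from $1$ to $d$. Hence {\sc Relabel} runs in $O(dn)$ and returns $\gamma_v$ as an array. Inverting $\gamma_v$ is $O(n)$, and every entry of $\gamma_v^{-1}a_k\gamma_v$ is obtained from two array lookups and one composition, so the full tuple $a^v$ is produced in $O(dn)$; the code $C(G_{a^v})$ is the concatenation of these $d$ arrays, of total length $dn$, written in $O(dn)$; and comparing two strings of length $dn$ and, if needed, updating the running minimum is again $O(dn)$.

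Summing over the $n$ choices of $v$ yields $O(n\cdot dn)=O(dn^2)$, and the one-time $O(dn)$ preprocessing is absorbed into this. I expect the only point that needs genuine care to be the bookkeeping in steps (ii) and (iii): one must form the conjugates and read off the code by direct array manipulation, rather than, say, recomputing images by repeated permutation multiplication, which would push the per-iteration cost above $O(dn)$. No idea beyond choosing the right in-memory representation is required here, since the target $O(dn^2)$ is exactly the cost of $n$ independent linear-time relabelings.
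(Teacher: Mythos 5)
Your proposal is correct and follows essentially the same approach as the paper: one call to {\sc Relabel} plus the code computation costs $O(dn)$, and repeating this over all $n$ start vertices while maintaining the lexicographic minimum yields $O(dn^2)$. You have simply spelled out the implementation details (array representation, BFS accounting, forming the conjugate) that the paper states more tersely.
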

\begin{proof}
One call of {\sc Relabel}$(G_a, v)$ takes  time $O(dn)$ in order to construct  $G_{a^{v}}$, while its code $C(G_{a^{v}})$ can also be computed in linear time. Since this has to be repeated for each $v\in V(G_a)$, the total running time for constructing the codes  is $O(dn^2)$. Clearly, choosing the lexicographically smallest code does not increase this time bound.
\end{proof}

In the reminder of this section  we consider the case when $G_a$ is not connected. Let us denote the connected components of $G_a$ by $H_1, H_2, \ldots, H_k$, and recall that each such component is a permutation digraph of degree  $d$.
Further, let us concatenate the  respective canonical labels  $\overline{C}(H_1), \overline{C}(H_2), \ldots, \overline{C}(H_k)$  in such  an order that the resulting string $C^*(G_a)$ is
  lexicographically smallest. The following result shows that  $C^*(G_a)$ is the canonical label of $G_a$.


\begin{theorem}
Let $\mathcal{P}$ be the set of all  permutation digraphs of size $n$ and degree $d$, and let $L$ be the set of all strings of length $dn$ over $[n]$.
Then the function  $\mathcal{L} \colon \mathcal{P} \to L$ defined by 
 $\mathcal{L}(G_a) = C^*(G_a)$  is a canonical labeling function for $\mathcal{P}$.
\end{theorem}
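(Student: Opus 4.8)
The plan is to reduce the disconnected case to the connected case already handled in Proposition~\ref{prop:can-con}, using the fact that a colour-isomorphism between two permutation digraphs must map connected components onto connected components. Since each connected component $H_i$ of $G_a$ is itself a connected permutation digraph of degree $d$, its canonical label $\overline{C}(H_i)$ is defined, and $\mathcal{L}(G_a) = C^*(G_a)$ is the lexicographically smallest concatenation of these labels. Note the size of each $H_i$ may be smaller than $n$, so strictly speaking $\overline{C}(H_i)$ is a string over a smaller alphabet of a smaller length; I would either pad these uniformly or simply observe that the total length $\sum_i |V(H_i)| \cdot d = dn$ and that concatenation preserves enough structure to recover the component sizes (e.g. the components are determined once we know the multiset of component labels). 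I should make this bookkeeping explicit but it is routine.

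The core of the proof is the two-directional equivalence. For the easy direction, suppose $C^*(G_a) = C^*(G_b)$. Then the multiset of component canonical labels $\{\overline{C}(H_1), \ldots, \overline{C}(H_k)\}$ for $G_a$ equals the corresponding multiset for $G_b$: indeed, since each $C^*$ is obtained by sorting the component labels into lexicographically smallest concatenation order, equality of the two sorted concatenations forces equality as multisets (here one uses that the component labels, being canonical labels of connected digraphs on a known vertex set, have a prefix-free or otherwise self-delimiting structure, or one simply sorts the labels and notes the sorted sequences coincide). Pairing up components with equal canonical labels and invoking Proposition~\ref{prop:can-con} in each pair yields colour-isomorphisms $H_i \cong H'_{\pi(i)}$; taking the disjoint union of these gives a colour-isomorphism $G_a \cong G_b$, since there are no arcs between components.

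For the converse, suppose $f = (\phi_V, \phi_A)$ is a colour-isomorphism from $G_a$ onto $G_b$. A colour-isomorphism preserves walks and hence maps each connected component of $G_a$ bijectively onto a connected component of $G_b$; restricting $f$ to $H_i$ gives a colour-isomorphism from $H_i$ onto some component $H'_j$ of $G_b$. By Proposition~\ref{prop:can-con}, $\overline{C}(H_i) = \overline{C}(H'_j)$, so the multisets of component canonical labels of $G_a$ and $G_b$ agree. Since $C^*$ depends only on this multiset (it is the lexicographically least concatenation of its elements), we get $C^*(G_a) = C^*(G_b)$.

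The main obstacle I anticipate is not the isomorphism argument, which is a clean ``components go to components'' reduction, but the technical point that lexicographically-smallest \emph{concatenation} of a multiset of strings need not determine the multiset in general (e.g. concatenations can align in unintended ways when strings have different lengths). I would address this by sorting: define $C^*(G_a)$ as the concatenation of $\overline{C}(H_1), \ldots, \overline{C}(H_k)$ after sorting these labels lexicographically (rather than greedily choosing the smallest concatenation), or alternatively observe that in our setting all component labels can be taken to have a common format that makes the decomposition unique. Either way the equivalence ``$C^*(G_a)=C^*(G_b) \iff$ the component-label multisets agree'' must be stated carefully; once it is, the rest follows from Proposition~\ref{prop:can-con} and the observation that disjoint unions of colour-isomorphisms are colour-isomorphisms.
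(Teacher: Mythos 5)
Your approach mirrors the paper's proof, which is stated in a single sentence: it simply points to Proposition~\ref{prop:can-con} together with the (unstated but standard) facts that a colour-isomorphism maps connected components onto connected components and that $C^*$ depends only on the multiset of component labels. So the substance of your argument is the same as the paper's.

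That said, you correctly identify a subtlety that the paper's one-line proof silently skips. For the direction $C^*(G_a)=C^*(G_b)\Rightarrow G_a\cong G_b$ one needs that the lexicographically smallest concatenation of the component labels determines the underlying multiset of labels. For arbitrary strings of unequal lengths this is false (for instance $\{a,\,ab\}$ and $\{aa,\,b\}$ both have smallest concatenation $aab$), and the paper gives no argument that the particular BFS codes arising here are self-delimiting. Your proposed repairs — sorting the labels and concatenating, or padding/delimiting so that the decomposition into components is recoverable — both close the gap cleanly. It is also worth observing that wherever the paper actually invokes $C^*$ (Corollary~\ref{cor:cannon} and the proof of Theorem~\ref{thm:main}), the components handled together all have the \emph{same} size, so the component labels all have the same length $d\cdot(n/k)$; for equal-length strings the smallest concatenation is just the lexicographic sort, which does determine the multiset, so the ambiguity never bites in the paper's applications. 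But as a statement about general $\mathcal{P}$, the theorem would benefit from the explicit care you provide.
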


\begin{proof}
The proof follows  from the description of $C^*(G_a)$ above  and Proposition~\ref{prop:can-con}.
\end{proof}

In the next section we will make use of the following result regarding permutation digraphs when all connected components have equal size.

\begin{corollary}\label{cor:cannon}
If  a permutation digraph $G_a$ of size $n$ and $degree$ $d$  consists of precisely $k$ equal-sized  connected components, then its   canonical label $C^*(G_a)$  can be computed in  $O(d n^2  /  k)$ time.
\end{corollary}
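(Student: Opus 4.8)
The plan is to decompose the cost of computing $C^*(G_a)$ into the cost of labeling each component and the cost of merging the labels, and then exploit the equal-size hypothesis to bound the first term tightly.

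First I would observe that $G_a$ has $k$ connected components $H_1,\dots,H_k$, each of which — by the equal-size assumption — has exactly $n/k$ vertices and degree $d$. For each such component, Lemma~\ref{lemma:time} applies with $n$ replaced by $n/k$, so the canonical label $\overline{C}(H_i)$ can be computed in $O\!\left(d(n/k)^2\right)$ time. Summing over all $k$ components gives a total of $k \cdot O\!\left(d(n/k)^2\right) = O\!\left(dn^2/k\right)$ time to produce all the component labels.

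Next I would account for the merging step: given the $k$ strings $\overline{C}(H_1),\dots,\overline{C}(H_k)$, each of length $d(n/k)$, we must concatenate them in the lexicographically smallest order to obtain $C^*(G_a)$. This amounts to sorting $k$ strings of total length $dn$, which can be done in $O(dn \log k)$ time (or $O(dn + k\log k)$ with a careful radix-type approach), and in any case this is $O(dn \log n)$, which is dominated by $O(dn^2/k)$ as long as $k = O(n/\log n)$; for the regime of interest in the next section ($k = \omega(1)$ small components, so $n/k = \Theta(n^\epsilon)$ or similar) this is comfortably absorbed. I would state the bound cleanly as $O(dn^2/k)$, noting the sorting term is lower-order.

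The only mild subtlety — and the closest thing to an obstacle — is making sure the per-component application of Lemma~\ref{lemma:time} is legitimate: each $H_i$ is itself a connected permutation digraph of degree $d$ (as remarked right after the definition of connected components), so Lemma~\ref{lemma:time} applies verbatim to it with size $n/k$. Everything else is a routine summation. So the proof is essentially: apply Lemma~\ref{lemma:time} to each of the $k$ components of size $n/k$, sum to get $O(dn^2/k)$, and note the concatenation/sorting overhead is dominated.
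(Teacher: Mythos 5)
Your proof is essentially the paper's: apply Lemma~\ref{lemma:time} to each of the $k$ components (each connected, of size $n/k$, degree $d$), sum to $O(dn^2/k)$, and treat the concatenation-in-sorted-order step as lower order. The one small difference is in bounding the sort: the paper uses radix sort, giving $O\bigl((dn/k)(n/k + k)\bigr) = O(dn^2/k^2 + dn)$, which is dominated by $O(dn^2/k)$ for every $k \le n$ with no side condition, whereas your comparison-sort bound $O(dn\log k)$ forces the hedge ``as long as $k\log k = O(n)$''; switching to the radix-sort bound makes the corollary hold unconditionally as stated.
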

\begin{proof}
Let $H_1, H_2, \ldots, H_k$ be the connected components of $G_a$. Since each component $H_i$ is of size  $n/k$ we can find its canonical label $\overline{C}(H_i)$, by  Lemma~\ref{lemma:time},  in $O(d(n/k)^2)$ time. Consequently, 
the total running time for constructing the canonical labels of all components  is $O(dn^2/k)$.
To compute $C^*(G_a)$, all we need to do is to sort these labels. Using radix sort this can be done  in time  $O(dn/k(n/k + k))$,  which obviously does not increase the time bound $O(dn^2/k)$.
\end{proof}

\section{Proof of main result}\label{sec:main}

Recall that a tuple $a$ is  simultaneously  conjugate to a tuple $b$
 if and only if the permutation digraph $G_a$
 is color-isomorphic  to the permutation digraph $G_b$. Before considering the general case when the digraphs $G_a$ and $G_b$ have variable size components, we  deal with two extreme cases, namely, when the digraphs have either only equal-sized small components or only equal-sized large components.
 
\begin{lemma}\label{lemma:small}
Let  $G_a$ and $G_b$ be  permutation digraphs, each of size $n$ and degree $d$, and let both $G_a$ and $G_b$ consist of only small equal-sized connected components.  Then we can test whether $G_a$ and $G_b$ are colour-isomorphic in  time $o(n^2)$ at a given $d$.
\end{lemma}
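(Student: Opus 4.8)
The plan is to reduce the test to computing and comparing canonical labels, relying on Corollary~\ref{cor:cannon}. First I would partition each of $G_a$ and $G_b$ into connected components by a single breadth-first traversal of each digraph; since each digraph has $n$ vertices and $dn$ arcs, this costs $O(dn)$ time and also records the number of components and their common size. If $G_a$ and $G_b$ do not have the same number of components, or their components do not have the same size, then no colour-isomorphism can exist and we output ``not colour-isomorphic''. Otherwise, let $k$ be the common number of components; by hypothesis each component has size $n/k$ and is small, and since a component of size $n/k$ with $k$ bounded would be of size $\Theta(n)$ (hence large), we must have $k=\omega(1)$.

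Next I would invoke Corollary~\ref{cor:cannon} on each of $G_a$ and $G_b$: as each consists of exactly $k$ equal-sized components, its canonical label $C^*(\cdot)$ is produced in $O(dn^2/k)$ time. By the canonical labeling theorem established above (which rests on Proposition~\ref{prop:can-con}), the digraphs $G_a$ and $G_b$ are colour-isomorphic if and only if $C^*(G_a)=C^*(G_b)$. Both labels are strings of length $dn$, so this equality test takes a further $O(dn)$ time.

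Summing the contributions, the total running time is $O(dn)+O(dn^2/k)+O(dn)=O(dn^2/k)$. Treating $d$ as fixed and using $k=\omega(1)$, this bound is $o(n^2)$, which is exactly the claim.

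The step needing the most care is the asymptotic bookkeeping rather than any new combinatorial idea: one must check that the definition of ``small'' genuinely forces $k=\omega(1)$ (otherwise the $1/k$ saving is worthless), that the $O(dn)$ overheads for traversal, label comparison, and the radix sort hidden inside Corollary~\ref{cor:cannon} are all absorbed into $O(dn^2/k)$, and that ``$o(n^2)$ at a given $d$'' is read as ``$o(n^2)$ for each fixed $d$''. Once these points are pinned down, the statement is an immediate consequence of the canonical-labeling machinery already in place.
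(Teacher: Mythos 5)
Your argument is correct and follows the paper's proof essentially verbatim: both invoke Corollary~\ref{cor:cannon} to get canonical labels in $O(dn^2/k)$ time and then use $k=\omega(1)$ (forced by the smallness of the components) to conclude $o(n^2)$. The extra bookkeeping you supply about the preliminary traversal and the $O(dn)$ comparison cost is fine but was left implicit in the paper.
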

\begin{proof}
Let $k$ be the number of connected components. By Corollary~\ref{cor:cannon} we can compute the canonical labels of both digraphs and hence perform the isomorphism test  in  time $O(d n^2  /  k)$. Since all components are small, we have $k= \omega(1)$ and consequently $O(d n^2  /  k)= o(n^2)$ at a given $d$.
\end{proof}


\begin{lemma}\label{lemma:large}
Let  $G_a$ and $G_b$ be  permutation digraphs, each of size $n$ and degree $d$, and let both $G_a$ and $G_b$ consist of only large equal-sized connected components.  Then we can test whether $G_a$ and $G_b$ are colour-isomorphic in  $O(n^2 \log d /  \log n + dn \log n)$ time.
\end{lemma}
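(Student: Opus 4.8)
The plan is to reduce the problem to a constant number of instances of the transitive SCP and then invoke the subquadratic algorithm of \cite{BMP}. First I would compute, in $O(dn)$ time, the connected components $H_1,\dots,H_k$ of $G_a$ and $H'_1,\dots,H'_{k'}$ of $G_b$; by hypothesis all the $H_i$ share a common size $s_a$ and all the $H'_j$ share a common size $s_b$. If $k\neq k'$ or $s_a\neq s_b$, then $G_a$ and $G_b$ are not colour-isomorphic and we stop. Otherwise set $s=s_a=s_b$. Since every component is large, $s=\Theta(n)$, whence $k=n/s=O(1)$ and $\log s=\Theta(\log n)$.

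Each component $H_i$, being a permutation digraph of degree $d$ on $s$ vertices, is connected, so the restriction of $a$ to $V(H_i)$ (which is well defined, as $H_i$ is a union of cycles of every $a_\ell$) generates a transitive subgroup of the symmetric group on $V(H_i)$; the same holds for the components of $G_b$. Hence, for each ordered pair $(i,j)\in[k]\times[k]$, deciding whether $H_i\cong H'_j$ is exactly an instance of the transitive SCP on $s$ points with $d$ permutations, which by \cite{BMP} can be solved in $O(s^2\log d/\log s+ds\log s)$ time. Because $s=\Theta(n)$ this equals $O(n^2\log d/\log n+dn\log n)$, and since there are only $k^2=O(1)$ such pairs, running all of these tests still costs $O(n^2\log d/\log n+dn\log n)$.

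Finally, I would build the bipartite graph $B$ on the vertex classes $\{H_1,\dots,H_k\}$ and $\{H'_1,\dots,H'_k\}$ with an edge $H_iH'_j$ precisely when the test above certifies $H_i\cong H'_j$. As there are no arcs of $G_a$ or $G_b$ between distinct components, a colour-isomorphism $G_a\to G_b$ maps components bijectively onto components, so one exists if and only if $B$ admits a perfect matching (equivalently, since colour-isomorphism is an equivalence relation on the components and all components have equal size, if and only if the multiset of colour-isomorphism classes of the $H_i$ coincides with that of the $H'_j$). With $k=O(1)$ this last check is done in constant time, so the overall running time is $O(n^2\log d/\log n+dn\log n)$, as claimed. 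The only point requiring care is the bookkeeping that ``large'' forces both $k=O(1)$ and $\log s=\Theta(\log n)$, so that a constant number of size-$\Theta(n)$ calls to the routine of \cite{BMP} stays within the stated bound; beyond correctly invoking that earlier result there is no genuine algorithmic obstacle.
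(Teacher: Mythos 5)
Your proof is correct and follows essentially the same strategy as the paper: since all components are large there are $k=O(1)$ of them, so at most $k^2=O(1)$ pairwise isomorphism tests via the transitive-SCP algorithm of \cite{BMP}, each costing $O(n^2\log d/\log n + dn\log n)$, suffice. You are slightly more careful than the paper in spelling out how to combine the pairwise outcomes into a final answer (a perfect matching in a bipartite graph, or equivalently comparing multisets of isomorphism classes), a bookkeeping step the paper leaves implicit.
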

\begin{proof}
Let $k$ be the number of connected components. 
Since all components are large, we have $k= O(1)$.  Obviously, at most $k^2 = O(1)$ pairs of  components, each of size $n/k = O(n)$, need to be tested for  isomorphism. By \cite{BMP}, this requires a total of $O(n^2 \log d /  \log n + dn \log n)$ time.
\end{proof}

We are now ready to prove the main result.


\begin{proof}[Proof of Theorem~\ref{thm:main}]
Finding the components of $G_a$ and $G_b$ requires  $O(dn)$ time. If $G_a$ and $G_b$ do not
have an equal number of components of  the same size, they are not isomorphic, which can be tested by sorting the sizes of the components in time $o(n^2)$. So, let
 $G_a$ and $G_b$ have $p_i$ components of size $n_i$, $i\in [r]$, where without loss of generality we may assume that components of sizes $n_1, n_2, \ldots, n_j$ are large,  and the remaining ones are small. Obviously, components of different sizes can be tested separately.
By Lemma~\ref{lemma:large}, we can test  $p_i = O(1)$ large  components of size $n_i$ for isomorphism   in time $O(n_i^2 \log d / \log(n_i) +  dn_i \log n_i)$. On the other hand, by Corollary~\ref{cor:cannon},  we can test $p_i$ small components of size $n_i$ in time $O(dp_i n_i^2 ) $. Finally, for a large enough constant $c$ the  total time is bounded from above by
$$
c \bigg( \sum_{i=1}^j \frac{d  n_i^2}{ \log n_i}  + \sum_{i=j+1}^rdp_i n_i^2 \bigg) \leq c d \max \bigg\{\frac{  n_1}{ \log(n_1)},  \ldots, \frac{  n_j}{ \log(n_j)}, n_{j+1},\ldots, n_r \bigg\} \sum_{i=1}^r p_in_i.
$$
The max-term  is $o(n)$ since $n_i / \log(n_i) = o(n)$ for each large component as well as $n_i= o(n)$  for each small component. The final result follows as  $\sum_{i=1}^r p_in_i = n$.
\end{proof}


\section{Concluding remarks}\label{sec:conc}
It remains an open problem whether for a given positive integer $d$ the $d$-SCP in the symmetric group $S_n$ can be solved in a strongly subqudratic time in $n$, that is, in time $O(n^{2-\epsilon})$ for some $\epsilon > 0$. Further, completely unanswered is  the question of the problem's lower bound, except for the trivial one, $\Omega(n)$. The obvious question is whether it can be raised to $\Omega(n \log n)$ reflecting erroneous Sridhar's upper bound, or even to a higher bound by proving conditional lower bounds based on conjectures of hardness for well-studied problems, as it was already done for a number of other problems.


\begin{thebibliography}{99}

\bibitem{babai-quasi} L. Babai, Graph isomorphism in quasipolynomial time [extended abstract]. In: \textit{Proceedings
of the Forty-eighth Annual ACM Symposium on Theory of Computing}, STOC ’16, pages 684--697,
New York, NY, USA, 2016.


\bibitem{BMP} A. Brodnik, A. Malni\v c, R. Po\v zar, The simultaneous conjugacy problem in the symmetric group, \textit{arXiv:1907.07889v2}.


  
  
  
\bibitem{Diestel}  R. Diestel,  ``Graph Theory'', Springer-Verlag, New York, 2005.

  
\bibitem{Fontet} M. Fontet, Calcul de Centralisateur d'un Grupe de Permutatations, \textit{Bull.~Soc.~Math.~France Mem.} {\bf 49-50} (1977), 53--63. 

\bibitem{Hoff82} C.~M Hoffmann,  Subcomplete Generalization of Graph Isomorphism. \textit{J. of Comp. and Sys.~Sci.} {\bf 25} (1982), 332--359. 



\bibitem{MNS00}  A. Malni\v c, R. Nedela,  M. \v Skoviera,
                  Lifting graph automorphisms by voltage assignments.
                  \textit{European~J.~Combin.} {\bf 21} (2000), 927--947. 

\bibitem{MNS02}   A. Malni\v c, R. Nedela,  M. \v Skoviera,
                  Regular homomorphisms and regular maps.
                   \textit{ European~J.~Combin.} {\bf 23} (2002), 449--461. 
                  
\bibitem{PP17} P. Poto\v cnik, R. Po\v zar,  Smallest tetravalent half-arc-transitive graphs with the vertex-stabiliser isomorphic to the dihedral group of order 8, \textit{J. Combin. Theory Ser. A} \textbf{145} (2017), 172--183.

\bibitem{P19}  R. Po\v zar,  Computing stable epimorphisms onto finite groups. \textit{J. Symbolic Comput.}  \textbf{92}  (2019), 22--30. 

\bibitem{Seress}    \'{A}. Seress,  ``Permutation group algorithms'', Cambridge
Tracts in Mathematics 152. Cambridge University Press, 2003.

\bibitem{Srid89} M.~A. Sridhar,  A fast algorithm for testing isomorphism of permutation networks. 
  \textit{IEEE Trans. Computers (TC)} {\bf 38(6)} (1989), 903--909.


\bibitem{Oruc}  A. Yavuz Oru\c{c}, M. Yaman Oru\c{c}, On testing isomorphism of permutation networks.
 \textit{IEEE Trans. Computers (TC)} {\bf 34} (1985), 958-962.




\end{thebibliography}
\end{document}